\newtheorem{thm}{Theorem}[section]
\newtheorem{prop}[thm]{Proposition}
\newtheorem{defi}[thm]{Definition}
\newcommand\be{\begin{align}}
\newcommand\ee{\end{align}}
\newcommand\bea{\begin{eqnarray}}
\newcommand\eea{\end{eqnarray}}
\newcommand\bi{\begin{itemize}}
\newcommand\ei{\end{itemize}}
\newcommand\ben{\begin{enumerate}}
\newcommand\een{\end{enumerate}}
\newcommand\bc{\begin{center}}
\newcommand\ec{\end{center}}
\newcommand\ba{\begin{array}}
\newcommand\ea{\end{array}}
\newcommand{\R}{\mathbb{R}}
\newcommand{\N}{\mathbb{N}}
\renewcommand{\P}{\mathbb{P}}
 \newcommand{\scalprod}[2]{\left\langle #1,#2 \right\rangle}
\begin{document}
\renewcommand{\thefootnote}{\fnsymbol{footnote}}
 \footnotetext[1]{
   Hausdorff Center for Mathematics, Universit{\"a}t Bonn, Bonn, Germany
   }
\footnotetext[2]{
    Courant Institute of Mathematical Science, New York University, New York, NY, USA
    }
\renewcommand{\thefootnote}{\arabic{footnote}}

\title{New and improved Johnson-Lindenstrauss embeddings via the Restricted Isometry Property}

\author{Felix Krahmer and Rachel Ward}

\maketitle

\abstract{Consider an $m \times N$ matrix $\Phi$ with the \emph{Restricted Isometry Property} of order $k$ and level $\delta$, that is, the norm of any $k$-sparse vector in $\mathbb{R}^N$ is preserved to within a multiplicative factor of $1 \pm \delta$ under application of $\Phi$.   We show that by randomizing the column signs of such a matrix $\Phi$, the resulting map with high probability embeds \emph{any} fixed set of $p = O(e^{k})$ points in $\mathbb{R}^N$ into $\mathbb{R}^m$ without distorting the norm of any point in the set by more than a factor of $1 \pm 4 \delta$.   Consequently, matrices with the Restricted Isometry Property and with randomized column signs provide optimal Johnson-Lindenstrauss embeddings up to logarithmic factors in $N$.  In particular, our results improve the best known bounds on the necessary embedding dimension $m$ for a wide class of structured random
matrices; for partial Fourier and partial Hadamard matrices, we improve the recent bound $m \gtrsim \delta^{-4} \log(p) \log^4(N)$ appearing in Ailon and Liberty \cite{ailib} to $m \gtrsim \delta^{-2} \log(p) \log^4(N)$, which is optimal up to the logarithmic factors in $N$.  Our results also have a direct application in the area of compressed sensing for redundant dictionaries.  }

\section{Introduction}
The \emph{Johnson-Lindenstrauss} (JL) Lemma states that any set of $p$ points in high dimensional Euclidean space can be embedded into $O(\varepsilon^{-2} \log(p))$ dimensions, without distorting the distance between any two points by more than a factor between $1 - \varepsilon$ and $1 + \varepsilon$.  In its original form, the Johnson-Lindenstrauss Lemma reads as follows.
\begin{thm}[Johnson-Lindenstrauss Lemma  \cite{JL}]
\label{JL}
Let $\varepsilon \in (0,1)$ and let $x_1, ..., x_p \in \mathbb{R}^N$ be arbitrary points.  Let $m = O(\varepsilon^{-2} \log(p))$ be a natural number.  Then there exists a Lipschitz map $f: \mathbb{R}^N \rightarrow \mathbb{R}^m$ such that
\begin{align}
\label{embed}
(1 - \varepsilon) \| x_i-x_j \|_2^2 \leq \| f (x_i)-f(x_j) \|_2^2 \leq (1 + \varepsilon) \| x_i-x_j \|_2^2
\end{align}
for all $i,j \in \{1,2, ..., p\}$.  Here $\| \cdot \|_2$ stands for the Euclidean norm in $\mathbb{R}^N$ or $\mathbb{R}^m$, respectively.
\end{thm}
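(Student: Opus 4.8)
\section{Proof strategy}

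The plan is to prove this classical lemma by the probabilistic method: I will exhibit a \emph{random linear} map $\Phi: \R^N \to \R^m$ and show that with positive probability a single realization of it satisfies \eqref{embed} for all $\binom{p}{2}$ pairs simultaneously. Since every linear map $\R^N \to \R^m$ is Lipschitz, such a realization is the desired $f$, and no separate argument for the Lipschitz property is needed.

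Concretely, let $\Phi$ be the $m \times N$ matrix with independent $\mathcal{N}(0, 1/m)$ entries (a suitably normalized random $\pm 1$ matrix, or a random orthogonal projection, would serve just as well). The engine of the proof is a one-vector concentration bound, the \emph{distributional} Johnson--Lindenstrauss estimate: there is an absolute constant $c > 0$ so that for every fixed $x \in \R^N$ and every $\varepsilon \in (0,1)$,
\begin{align}
\P\Big[ \big| \| \Phi x \|_2^2 - \| x \|_2^2 \big| \geq \varepsilon \| x \|_2^2 \Big] \leq 2 e^{-c m \varepsilon^2}.
\end{align}
To prove this I would rescale so that $\|x\|_2 = 1$, observe that $m \| \Phi x \|_2^2$ is then a $\chi^2$ random variable with $m$ degrees of freedom, and apply the standard Chernoff / moment-generating-function tail bounds for $\chi^2$ variables to the upper and lower deviations separately; optimizing the resulting exponents over the free parameter gives the displayed two-sided bound with an absolute $c$.

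With this in hand, I would apply the concentration bound to each difference vector $x_i - x_j$ and take a union bound over the at most $p^2/2$ pairs: the probability that \eqref{embed} fails for some pair is at most $p^2 e^{-c m \varepsilon^2}$, which is strictly less than $1$ as soon as $m \geq C \varepsilon^{-2} \log p$ for a suitable absolute constant $C$. Hence a realization of $\Phi$ that embeds all $p$ points exists, completing the proof.

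I expect the only real obstacle to be the concentration estimate, and within it the need to secure the $\varepsilon^{-2}$ scaling with a constant $c$ that is uniform over $\varepsilon \in (0,1)$; the union bound, the count $\binom{p}{2}$, and the Lipschitz observation are bookkeeping. It is worth noting that this template — reduce to concentration of $\| \Phi x \|_2^2$ about a single vector, then union bound over difference vectors — is precisely what the remainder of the paper reuses, with the Gaussian $\Phi$ replaced by a matrix built from one with the Restricted Isometry Property and randomized column signs.
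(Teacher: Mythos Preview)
Your proposal is correct and matches the paper's treatment. The paper does not give a self-contained proof of Theorem~\ref{JL} (it is quoted from \cite{JL}), but immediately after the statement it outlines exactly your template: pass to the set of differences $E = \{x_i - x_j\}$, establish the single-vector concentration inequality \eqref{eq:concineq}, and take a union bound over $E$; your choice of a Gaussian $\Phi$ is one of the constructions the paper explicitly cites in this context.
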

As shown in \cite{alon}, the bound for the size of $m$ is tight up to an $O(\log(1/\varepsilon))$ factor.  In the original paper of Johnson and Lindenstrauss, it was shown that a random orthogonal projection, suitably normalized, provides such an embedding with high probability \cite{JL}.  Later, this property was also verified for Gaussian random matrices, among other random matrix constructions \cite{FM, Dasgupta}.  As a consequence,  the JL Lemma has become a valuable tool for dimensionality reduction in a myriad of applications ranging from computer science \cite{ind-01}, numerical linear algebra \cite{sar,hmt, liwomaroty07}, manifold learning \cite{randomprojections}, and compressed sensing \cite{badadewa08}, \cite{rw09}, \cite{caneel10}.   

In most of these frameworks, the map $f$ under consideration is a linear map represented by an $m\times N$ matrix $\Phi$. In this case, one can consider the set of differences $E=\{x_i-x_j\}$; to prove the theorem, one then needs to show that
\begin{align}
(1 - \varepsilon) \| y \|_2^2 \leq \| \Phi y \|_2^2 \leq (1 + \varepsilon) \| y \|_2^2, \hspace{3mm} \textrm{ for all } y \in E. \label{eq:JLlin}
\end{align}
When $\Phi$ is a random matrix,
the proof that $\Phi$ satisfies the JL lemma with high probability boils down to showing a concentration inequality of the type
\begin{align}
\P\big((1-\varepsilon) \|x\|_{2}^2 \leq \| \Phi x \|_{2}^2 \leq (1+\varepsilon) \|x\|_{2}^2 \big) \geq 1 - 2 \exp(-c_0 \varepsilon^2 m ), \label{eq:concineq}
\end{align}
for an arbitrary fixed $x\in \R^N$, where $c_0$ is an absolute constant in the optimal case, and in addition possibly mildly dependent on $N$ in almost-optimal scenarios as for example in \cite{ailib}. Indeed it  directly follows by a union bound over $E$ (as in the proof of Theorem~\ref{thm:main} below) that  \eqref{eq:JLlin} holds with high probability.

In order to reduce storage space and implementation time of such embeddings, the design of structured random JL embeddings has been an active area of research in recent years  \cite{ailibsing, doganchentran, ailib, kanel}; see \cite{ailibsing} or \cite{kanel} for a good overview of these efforts. Of particular importance in this context is whether fast (i.e. $O(N\log(N))$) multiplication algorithms are available for the resulting matrices. Fast JL embeddings with optimal embedding dimension $m = O(\varepsilon^{-2} \log(p))$ were first constructed by Ailon and Chazelle \cite{fastjlt}, but their embeddings are fast only for $p \lesssim e^{N^{1/3}}$ vectors. This restriction on the number of vectors was later weakened to $p \lesssim e^{N^{1/2}}$ \cite{ailonlib08}. In \cite{ailib}, fast JL embeddings were constructed without any restrictions on the number of vectors, but the authors only provide sub-optimal embedding dimension $m = O(\varepsilon^{-4} \log(p)\log^4(N))$.  In this paper, we provide the first unrestricted fast JL construction with optimal embedding dimension up to logarithmic factors in $N$.  Note that in the range $p \gtrsim e^{N^{1/2}}$ not covered by the constructions in \cite{fastjlt, ailonlib08}, a logarithmic factor in $N$ is bounded by $\log(\log(p))$, and thus plays a minor role.

\paragraph{The Johnson-Lindenstrauss Lemma in Compressed Sensing.}
One of the more recent applications of the Johnson-Lindenstrauss Lemma is to the area of \emph{compressed sensing}, which is centered around the following phenomenon:  For many underdetermined systems of linear equations $\Phi x = y$, the solution of minimal $\ell_1$-norm is also the sparsest solution.  To be precise, a vector $x \in \mathbb{R}^N$ is $k$-sparse if $ | \{ j : | x_j | > 0 \} | \leq k$.  A by now classical sufficient condition on the matrix $\Phi$ for guaranteeing equivalence between the minimal $\ell_1$ norm solution and sparsest solution is the so-called \emph{Restricted Isometry Property} (RIP)\cite{carota06, cata06, do06}.
\begin{defi}
A matrix $\Phi \in \mathbb{R}^{m \times N}$ is said to have the Restricted Isometry Property of order $k$ and level $\delta \in (0,1)$ (equivalently, $(k, \delta)$-RIP) if
\begin{align}
\label{rip}
(1 - \delta) \| x \|_2^2 \leq \| \Phi x \|_2^2 \leq (1 + \delta) \| x \|_2^2 \hspace{12mm} \textrm{ for all}\hspace{2mm} k \textrm{-sparse } x \in \mathbb{R}^N .
\end{align}
The restricted isometry constant $\delta_k$ is defined as the smallest value of $\delta$ for which \eqref{rip} holds.
\end{defi}
In particular, if $\Phi$ has $(2k, \delta_{2k})$-RIP with $\delta_{2k} \leq  2/(3 + \sqrt{\frac{7}{4}}) \approx .4627$, and if $y = \Phi x$ admits a $k$-sparse solution $x^{\#}$, then $x^{\#} = \arg \min_{\Phi z = y} \| z \|_1$ \cite{fo09}.

Gaussian and Bernoulli random matrices have $(k, \delta)$-RIP with high probability, if the embedding dimension $m \gtrsim \delta^{-2} k \log(N/k)$ \cite{badadewa08}.   Up to the constant, lower bounds for Gelfand widths of $\ell_1$-balls \cite{gagl84, foparaul10} show that this dependence on $N$ and in $k$ is optimal.  The Restricted Isometry Property also holds for a rich class of structured random matrices, where usually the best known bounds for $m$ have additional log factors in $N$.  All known deterministic constructions of RIP matrices require that $m \gtrsim k^2$ or  at least $m \gtrsim k^{2-\mu}$ for some small constant $\mu > 0$ \cite{bdfkk}.

The similarity between the expressions in \eqref{eq:JLlin} and \eqref{rip} suggests a connection between the JL lemma and the Restricted Isometry Property.  A first result in this direction was established in \cite{badadewa08}, wherein it was shown that random matrices  satisfying a concentration inequality of type \eqref{eq:concineq} (and hence the JL Lemma) satisfy the RIP of optimal order.  More precisely, the authors prove the following theorem.

\begin{thm}[Theorem 5.2 in \cite{badadewa08}] \label{thm:JLRIP}
Suppose that $m, N$, and $0<\delta<1$ are given. If the probability distribution generating the $m\times N$ matrices $\Phi$ satisfies the concentration inequality \eqref{eq:concineq} with $\varepsilon = \delta$ and absolute constant $c_0$, then there exist absolute constants $c_1,c_2$ such that with probability $\geq 1-2 e^{-c_2 \delta^{2} m},$ the RIP \eqref{rip} holds for $\Phi$ with the prescribed $\delta$ and any $k\leq c_1 \delta^{2} m/\log(N/k)$.
\end{thm}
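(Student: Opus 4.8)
The plan is to prove the contrapositive-flavored direction of the correspondence: construct an explicit net-based argument showing that if $\Phi$ preserves norms of sufficiently many structured test vectors, then it preserves norms of all $k$-sparse vectors. First I would fix an index set $T \subset \{1,\dots,N\}$ with $|T| = k$ and restrict attention to the subspace $X_T = \{x : \operatorname{supp}(x) \subseteq T\}$, on which $\Phi$ acts as the $m \times k$ submatrix $\Phi_T$. The goal for each such $T$ is to show $\bigl| \|\Phi_T x\|_2^2 - \|x\|_2^2 \bigr| \le \delta \|x\|_2^2$ for all $x \in X_T$, and then take a union bound over the $\binom{N}{k} \le (eN/k)^k$ choices of $T$.

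The key step is a covering-number argument. I would choose a finite $\rho$-net $\mathcal{Q}_T$ of the unit sphere of $X_T$ with $|\mathcal{Q}_T| \le (1 + 2/\rho)^k$ (a standard volumetric bound), and apply the concentration inequality \eqref{eq:concineq} with $\varepsilon$ a small multiple of $\delta$ to each point of the net, together with a union bound over $\mathcal{Q}_T$. Taking $\rho$ to be a small absolute constant, say $\rho = 1/4$, the probability that $\Phi$ fails to approximately preserve the norm of some net point is at most $2(1+2/\rho)^k e^{-c_0 (\delta/c)^2 m}$; combined with the $(eN/k)^k$ factor from the union over all $T$, this is $\le 2 e^{-c_2 \delta^2 m}$ precisely when $k \lesssim \delta^2 m / \log(N/k)$, which is the claimed range. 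The remaining work is to upgrade net-level control to control on the whole sphere: defining $A$ as the smallest constant with $\|\Phi_T x\|_2^2 \le (1+A)\|x\|_2^2$ for all unit $x \in X_T$, one writes an arbitrary unit vector as a net point plus a residual of norm $\le \rho$, applies the triangle inequality, and solves the resulting self-referential inequality $A \le \varepsilon' + (\text{const})\rho(1 + A)$ for $A$; choosing $\varepsilon'$ and $\rho$ appropriately in terms of $\delta$ yields $A \le \delta$, and the lower bound follows similarly.

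The main obstacle is bookkeeping the constants so that the two-sided bound closes with the correct final level $\delta$ rather than some larger multiple of it, while keeping the failure probability exponentially small and the condition on $k$ of the stated form. Concretely, the self-referential step forces $\varepsilon'$ (the level at which \eqref{eq:concineq} is invoked) to be a fixed fraction of $\delta$, which costs only a constant factor in $c_0$ inside the exponent — harmless — but one must verify that the net cardinality $(1+2/\rho)^k$ with $\rho$ a fixed constant contributes only an $O(k)$ term to the exponent, so that it can be absorbed into $\log(N/k) \cdot k$ after adjusting $c_1$. I would also need the elementary fact that preserving all norms of vectors in each $X_T$ is equivalent to the $(k,\delta)$-RIP, since every $k$-sparse vector lies in some $X_T$. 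No step is deep; the content is entirely in the standard net/union-bound machinery and careful constant tracking, which is why the theorem holds for any distribution satisfying \eqref{eq:concineq} without further structural assumptions.
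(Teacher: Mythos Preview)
The paper does not actually prove this theorem: it is quoted verbatim as Theorem~5.2 of \cite{badadewa08} and used only as motivation for the converse direction that is the paper's real contribution. So there is no in-paper proof to compare against.

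That said, your proposal is correct and is precisely the argument given in the cited reference \cite{badadewa08}. The structure --- fix a support set $T$, cover the unit sphere of $X_T$ by a $\rho$-net of cardinality $(1+2/\rho)^k$, apply the concentration inequality \eqref{eq:concineq} at each net point, take a union bound over net points and over the $\binom{N}{k}\le (eN/k)^k$ supports, and then upgrade from the net to the whole sphere via a self-referential operator-norm bound --- is exactly the standard proof. One small technical remark: the self-bounding step is cleanest if you work with the norm rather than the squared norm, i.e.\ set $B=\sup_{\|x\|_2=1,\,x\in X_T}\|\Phi_T x\|_2$ and derive $B\le\sqrt{1+\varepsilon'}+\rho B$, hence $B\le\sqrt{1+\varepsilon'}/(1-\rho)$; the squared-norm version you wrote requires an extra Cauchy--Schwarz or polarization step to handle the cross term $2\langle\Phi_T q,\Phi_T r\rangle$, which is where the ``$(\text{const})\rho(1+A)$'' you allude to would come from. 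Either way the inequality closes with the stated constants, and your identification of the constant bookkeeping as the only real work is accurate.
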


\noindent In this sense, the JL Lemma \emph{implies} the Restricted Isometry Property.

\paragraph{Contribution of this work.} We prove a converse result to Theorem \ref{thm:JLRIP}:  We show that RIP matrices, with randomized column signs, provide Johnson-Lindenstrauss embeddings that are optimal up to logarithmic factors in the ambient dimension.
In particular, RIP matrices of optimal order provide Johnson-Lindenstrauss embeddings of optimal order as such, up to a logarithmic factor in $N$ (see Theorem~\ref{thm:main}). Note that without randomization, such a converse is impossible as vectors in the null space of the fixed parent matrix are always mapped to zero.

This observation has several consequences in the area of compressed sensing, and also allows us to obtain improved JL embedding results for several matrix constructions with existing RIP bounds  \cite{cata06, ruve08, ra09-1, badularotr10, rw10}.  Of particular interest is the random partial Fourier or the random partial Hadamard matrix, which is formed by choosing a random subset of $m$ rows from the $N \times N$ discrete Fourier or Hadamard matrix respectively,  and with high probability has $(k, \delta)$-RIP if the embedding dimension $m \gtrsim \delta^{-2} k \log^4(N)$.   For these matrices with randomized column signs, the running time for matrix-vector multiplication is $O(N\log(N))$ as opposed to the running time of $O(Nm)$ for purely random matrices.  For such constructions, the previous best-known embedding dimension to ensure that \eqref{eq:JLlin} holds with probability $1 - \eta$, given by Ailon and Liberty \cite{ailib}, is $m \asymp \varepsilon^{-4}\log(p/\eta)\log^4(N)$.   We can improve their result to have optimal dependence on the distortion, $\varepsilon$, showing that $m \asymp \varepsilon^{-2}\log(p/\eta)\log^4(N)$ rows suffice for the embedding.

This paper is structured as follows:  Section \ref{sec:notation} introduces necessary notation. In Section \ref{sec:main}, we state our main results, and Section~\ref{sec:ex} gives concrete examples of how these results improve on the best-known JL bounds for several matrix constructions as well as applications of our findings in compressed sensing.  In Section \ref{sec:ingr} we give  the relevant concentration inequalities and explicit RIP-based matrix inequalities that are needed for the proofs, which are then carried out in Section \ref{sec:proof}.

\section{Notation\label{sec:notation}}
Before continuing, let us fix some notation to be used in the remainder. For $N\in\N$, we denote $[N] = \{1,\hdots,N\}$.
The $\ell_p$-norm of a vector $x = (x_1,\dots, x_N) \in \mathbb{R}^N$ is defined as
\[
\|x\|_p = \big( \sum_{j=1}^N |x_j|^p \big)^{1/p}, \quad 1 \leq p < \infty,
\]
and $\|x\|_\infty = \max_{j=1,\hdots,N} |x_j|$ as usual.  For a matrix $\Phi = (\Phi_{j,\ell}) \in \mathbb{R}^{m \times N}$, its operator norm is  $\| \Phi \| := \sup_{\| x \|_2 = 1} \| \Phi x \|_2$, and its Frobenius norm is defined by
\[
\| \Phi \|_{\cal{F}} :=  \big( \sum_{j=1}^{m} \sum_{\ell=1}^N | \Phi_{j,\ell} |^2 \big)^{1/2}.
\]
For two functions $f, g:S\rightarrow \R^+$, $S$ an arbitrary set, we write $f\gtrsim g$ if there is a constant $C>0$ such that $f(x)\ge C g(x)$ for all $x\in S$; we write $f\asymp g$ if $f \gtrsim g$ and $g\gtrsim f$.
Let $N$ and $s\ll N$ be given and set $R=\left\lceil\frac{N}{s}\right\rceil$.  For given $x = (x_1,\dots, x_N) \in \mathbb{R}^N$, we say that $x$ is in {\em decreasing arrangement}, if one has $|x_i|\geq |x_j|$ for $i<j$.  For vectors in decreasing arrangement, we decompose $x = (x_{(1)}, \dots , x_{(J)}, \dots , x_{(R)})$ into blocks of size $s = k/2$, i.e. $x_{(J)} \in \mathbb{R}^s$; the last block $x_{(R)}$ is potentially of smaller size. We will also consider the coarse decomposition $x = (x_{(1)}, x_{(\flat)})$, where $x_{(\flat)} = (x_{(2)}, ..., x_{(R)}) \in \mathbb{R}^{N-s}$.  Denote by $]L[$ the indices corresponding to the $L$-th block. For $ j, \ell \in [N]$ we write $j \sim l$ if the two indices are associated to the same block, and we write $j\nsim \ell$ otherwise.
Given a matrix $\Phi\in\mathbb{R}^{m \times N}$,  write $\Phi_j$ to denote the $j$-th column, $\Phi_{(J)} \in \mathbb{R}^{m \times s}$ to denote the matrix that is the restriction of $\Phi$ to the $s$ columns indexed by $J$ (again with the obvious modification for $J=R$), and $\Phi_{(\flat)}$ to denote the restriction of $\Phi$ to all but the first $k$ columns.
Finally, for a vector $x \in \mathbb{R}^N$, we denote by $D_x = (D_{i,j}) \in \mathbb{R}^{N \times N}$ the diagonal matrix satisfying $D_{j,j} = x_j$.

\section{The main results \label{sec:main}}

\begin{thm}\label{thm:main}
Fix $\eta > 0$ and $\varepsilon \in (0,1)$, and consider a finite set $E \subset\mathbb{R}^N$ of cardinality $|E| = p$.  Set $k \geq 40\log{\frac{4p}{\eta}}$,
and suppose that $\Phi \in \mathbb{R}^{m \times N}$ satisfies the Restricted Isometry Property of order $k$ and level $\delta \leq \frac{\varepsilon}{4}$.  Let $\xi \in \mathbb{R}^N$ be a Rademacher sequence, i.e., uniformly distributed on $\{-1,1\}^N$.
Then with probability exceeding $1 - \eta$,
\begin{align}
\label{embedding}
(1 - \varepsilon) \| x \|_2^2 \leq \| \Phi D_{\xi} x \|_2^2 \leq (1 + \varepsilon) \| x \|_2^2
\end{align}
uniformly for all $x\in E$.
\end{thm}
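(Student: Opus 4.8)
\section*{Proof proposal}

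The plan is to establish the pointwise concentration estimate
\[
\P\bigl(\bigl|\,\|\Phi D_\xi x\|_2^2 - \|x\|_2^2\,\bigr| > \varepsilon\|x\|_2^2\bigr) \le \frac{\eta}{p}
\]
for an arbitrary fixed $x$, and then obtain \eqref{embedding} by a union bound over the $p$ points of $E$, exactly as in the passage following \eqref{eq:concineq}. Fix $x$; by homogeneity assume $\|x\|_2 = 1$, and since permuting the columns of $\Phi$ preserves the RIP while a Rademacher sequence is exchangeable, assume $x$ is in decreasing arrangement. Set $v_J := \Phi D_\xi x_{(J)} = \Phi_{(J)}D_{\xi_{(J)}}x_{(J)}$ for $J = 1,\dots,R$, so that $\Phi D_\xi x = \sum_J v_J$ and
\[
\|\Phi D_\xi x\|_2^2 = \sum_{J=1}^R \|v_J\|_2^2 + 2\sum_{I<J}\langle v_I, v_J\rangle =: \Sigma + S .
\]
The first sum carries no randomness: $D_{\xi_{(J)}}x_{(J)}$ is supported on the $s = k/2\le k$ coordinates of block $J$ and has norm $\|x_{(J)}\|_2$, so the RIP \eqref{rip} gives $\|v_J\|_2^2\in(1\pm\delta)\|x_{(J)}\|_2^2$ and hence $\Sigma \in 1\pm\delta$. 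Since $S$ has mean zero, it suffices to show that $|S|\le 3\delta$ with probability at least $1 - \eta/p$; combined with $\delta\le\varepsilon/4$ this yields the theorem.

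I split $S = S_1 + S_\flat$, where $S_1 := 2\langle v_1, \sum_{J\ge2}v_J\rangle = 2\sum_{\ell\,\nsim\,1}\xi_\ell x_\ell\langle v_1, \Phi_\ell\rangle$ collects the interactions of the (possibly peaky) head block with everything else, and $S_\flat := 2\sum_{2\le I<J}\langle v_I, v_J\rangle$ is a chaos in the Rademacher variables $\xi_{(\flat)}$ alone. For $S_1$, I condition on $\xi_{(1)}$: then $S_1$ is a Rademacher sum with coefficients $2x_\ell\langle v_1,\Phi_\ell\rangle$, so Hoeffding's inequality applies with variance proxy $4\sum_{\ell\,\nsim\,1}x_\ell^2\langle v_1,\Phi_\ell\rangle^2 = 4\sum_{J\ge2}\|D_{x_{(J)}}\Phi_{(J)}^{T}v_1\|_2^2$. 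Using the standard RIP bound $\|\Phi_{(J)}^{T}\Phi_{(1)}\|\le\delta$ for $J\ge2$ (legitimate since $2s = k$) and the flatness estimate $\|x_{(J)}\|_\infty^2\le\|x_{(J-1)}\|_2^2/s$, valid for vectors in decreasing arrangement, this variance proxy is at most $4\delta^2\|x_{(1)}\|_2^2\sum_{J\ge2}\|x_{(J)}\|_\infty^2 \le 4\delta^2/s$; Hoeffding then gives $\P(|S_1|>\delta)\le 2e^{-s/8}$, uniformly in $\xi_{(1)}$ and hence unconditionally.

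For $S_\flat$ I write $S_\flat = \langle\xi_{(\flat)}, B\xi_{(\flat)}\rangle$, where $B$ is the symmetric matrix whose $(I,J)$-block equals $D_{x_{(I)}}\Phi_{(I)}^{T}\Phi_{(J)}D_{x_{(J)}}$ for $2\le I\ne J\le R$ and vanishes on the block-diagonal, so that $\mathbb{E}S_\flat = 0$; I then apply the concentration inequality for Rademacher chaos recorded in Section~\ref{sec:ingr}, which requires bounds on $\|B\|_{\mathcal F}$ and $\|B\|$. This is the technical heart. For the Frobenius norm, the block-Gram bound $\|\Phi_{(I)}^{T}\Phi_{(J)}\|\le\delta$ together with the rank-$s$ estimate $\|\Phi_{(I)}^{T}\Phi_{(J)}\|_{\mathcal F}\le\sqrt{s}\,\delta$ gives $\|B\|_{\mathcal F}^2\le\delta^2 s\bigl(\sum_{I\ge2}\|x_{(I)}\|_\infty^2\bigr)^2\le\delta^2/s$, where the last inequality again uses flatness and where it is crucial that the head block no longer appears, which is what upgrades this from $O(\delta)$ to $O(\delta/\sqrt s)$. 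For the operator norm one cannot argue entrywise; instead one factors $(By)_{(I)} = D_{x_{(I)}}\sum_{J\ge2,\,J\ne I}\Phi_{(I)}^{T}\Phi_{(J)}D_{x_{(J)}}y_{(J)}$ and combines the block-Gram bound with Cauchy--Schwarz and flatness to obtain $\|B\|\le\delta/s$. Feeding $\|B\|_{\mathcal F}^2\le\delta^2/s$ and $\|B\|\le\delta/s$ into the chaos inequality yields $\P(|S_\flat|>2\delta)\le 2e^{-ck}$ for an absolute constant $c$. Since $k\ge 40\log(4p/\eta)$, both $\P(|S_1|>\delta)\le 2e^{-k/16}$ and $\P(|S_\flat|>2\delta)\le 2e^{-ck}$ are at most $\eta/(2p)$ (the constant $40$ in the hypothesis being chosen precisely to make this so), whence $\P(|S|>3\delta)\le\eta/p$; a union bound over $E$ completes the proof. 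I expect the main obstacle to be the pair of norm bounds on $B$ — in particular the operator-norm bound, which is exactly where the choice $s = k/2$ and the decreasing arrangement of $x$ must be exploited — together with the recognition that the head block has to be peeled off and handled separately via the conditional Hoeffding step rather than being absorbed into the chaos (its own Frobenius-norm contribution being only $O(\delta)$, not $O(\delta/\sqrt k)$).
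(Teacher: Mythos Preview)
Your proposal is correct and follows essentially the same approach as the paper: the same block decomposition of $\|\Phi D_\xi x\|_2^2$ into a deterministic diagonal part handled by RIP, a head--tail cross term handled by conditional Hoeffding, and a tail--tail Rademacher chaos handled via the Hanson--Wright-type inequality, with the same norm bounds $\|B\|\le\delta/s$ and $\|B\|_{\mathcal F}\le\delta/\sqrt{s}$ (your Frobenius argument via the rank-$s$ estimate is a slight variant of the paper's direct entrywise computation, but equivalent). The only caveat is that with your specific thresholds $\delta$ and $2\delta$ the constant $40$ is marginally too small once the explicit constants in the chaos inequality are tracked --- the paper uses thresholds $0.2\varepsilon$ and $0.55\varepsilon$ to make $k\ge 40\log(4p/\eta)$ work exactly --- but this is a matter of tuning, not substance.
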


Along the way, our method provides a direct converse to Theorem~\ref{thm:JLRIP}:
\begin{prop}\label{prop:conv}
Fix $\varepsilon \in (0,1)$, and suppose that there is a constant $c_3$ such that for all pairs $(k,m)$ that are admissible in the sense that $k\leq c_3 \delta^2 m/\log(N/k)$, $\Phi=\Phi(m) \in \mathbb{R}^{m \times N}$ has the Restricted Isometry Property of order $k$ and level $\delta \leq \frac{\varepsilon}{4}$.  Fix $x\in\R^N$ and let $\xi \in \mathbb{R}^N$ be a Rademacher sequence, i.e., uniformly distributed on $\{-1,1\}^N$. Then there exists a constant $c_4$ such that  for all $m$, $\Phi D_\xi$ satisfies the concentration inequality \eqref{eq:concineq} for $c_0= c_4 \log^{-1} \left( \frac{N}{k}\right)$, where $k$ is any integer such that $(k,m)$ is admissible.
\end{prop}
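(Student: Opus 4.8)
The plan is to derive Proposition~\ref{prop:conv} from the single-point case of Theorem~\ref{thm:main}, which is essentially what the proof of that theorem yields before the concluding union bound. The first step is to record the following: if $\Phi$ has the Restricted Isometry Property of order $k$ and level $\delta \le \varepsilon/4$, then for every fixed $x \in \R^N$,
\begin{align*}
\P\big( (1-\varepsilon)\|x\|_2^2 \le \|\Phi D_\xi x\|_2^2 \le (1+\varepsilon)\|x\|_2^2 \big) \;\ge\; 1 - 4 e^{-k/40}.
\end{align*}
This is immediate from Theorem~\ref{thm:main} with $p=1$ upon choosing $\eta = 4 e^{-k/40}$: that choice makes $40\log(4p/\eta) = k$, so the hypothesis $k \ge 40\log(4p/\eta)$ holds and the conclusion is exactly the inequality above (vacuous, as it should be, when $4e^{-k/40}\ge 1$).

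The remaining work is to convert the exponent $k$ into the claimed exponent $c_4\varepsilon^2 m/\log(N/k)$. I work in the regime $\delta \asymp \varepsilon$ in which the statement is meaningful — concretely one may take $\delta=\varepsilon/4$, so that admissibility of $(k,m)$ reads $k\log(N/k) \lesssim \varepsilon^2 m$. Fix $m$; we may assume some pair $(k,m)$ is admissible, as otherwise the claim is empty. Let $k^\ast = k^\ast(m)$ be the largest admissible order; by hypothesis $\Phi$ has the RIP of order $k^\ast$, so by the first step the failure probability for $\Phi D_\xi$ is at most $4 e^{-k^\ast/40}$. Since $t\mapsto t\log(N/t)$ is increasing on $(0,N/e)$, admissibility of $k^\ast$ together with non-admissibility of $k^\ast+1$ forces $k^\ast\log(N/k^\ast)\asymp \varepsilon^2 m$ up to absolute constants, hence $k^\ast \asymp \varepsilon^2 m/\log(N/k^\ast)$. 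Finally, for any admissible $k$ one has $k\le k^\ast$, so $\log(N/k)\ge\log(N/k^\ast)$ and therefore
\begin{align*}
\frac{\varepsilon^2 m}{\log(N/k)} \;\le\; \frac{\varepsilon^2 m}{\log(N/k^\ast)} \;\asymp\; k^\ast.
\end{align*}
Thus $4 e^{-k^\ast/40}\le 4 e^{-c\varepsilon^2 m/\log(N/k)}$ for an absolute constant $c$, and taking $c_4$ a small enough multiple of $c$ gives $4 e^{-k^\ast/40}\le 2 e^{-c_4\varepsilon^2 m/\log(N/k)}$. This is precisely the concentration inequality~\eqref{eq:concineq} for $\Phi D_\xi$ with $c_0 = c_4\log^{-1}(N/k)$.

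The one genuinely delicate point is the control of the largest admissible order $k^\ast$, i.e. establishing the lower bound $k^\ast\log(N/k^\ast)\gtrsim\varepsilon^2 m$ that matches the trivial upper bound coming from admissibility. This rests on the monotonicity of $t\log(N/t)$ and a short granularity estimate passing from the real threshold to the integer $k^\ast$; it needs $\varepsilon^2 m$ to exceed an absolute multiple of $\log N$, and the residual small-$m$ cases are handled trivially (either no order is admissible, or the target probability bound is already vacuous). Everything else amounts to bookkeeping with absolute constants.
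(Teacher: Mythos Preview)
Your proposal is correct and follows essentially the same approach as the paper: first extract the single-vector concentration bound with exponent proportional to the RIP order (you do this by specializing Theorem~\ref{thm:main} to $p=1$, whereas the paper cites the intermediate estimates~\eqref{eq:conck2} and~\eqref{eq:concm} directly), then pass to a nearly-maximal admissible order $k^\ast$ (the paper's $k'$) via the monotonicity of $t\mapsto t\log(N/t)$ to convert the exponent into the form $c_4\varepsilon^2 m/\log(N/k)$. Your treatment of the granularity and small-$m$ edge cases is more explicit than the paper's ``elementary monotonicity argument'', but the substance is identical.
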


\section{Concrete examples and applications\label{sec:ex}}
Using Theorem~\ref{thm:main}, we can improve on the best Johnson-Lindenstrauss bounds for several matrix constructions that are known to have the Restricted Isometry Property:

\paragraph{\bf 1.  Matrices arising from bounded orthonormal systems.}
Consider an orthonormal system of real-valued functions $\varphi_j$, $j \in [N]$, on a measurable space ${\cal S}$ with respect to an orthogonalization measure $d\nu$.  Such systems are called \emph{bounded orthonormal systems} if $ \sup_{j \in [N]} \sup_{x \in {\cal S}} |\varphi_j(x)| \leq K$
for some constant $K\geq 1$.   We may associate to such a system the $m \times N$ matrix $\Phi$ with entries $\Phi_{\ell,j} = \frac{1}{\sqrt{m}} \varphi_{j}(x_{\ell})$, where $x_{\ell}, \hspace{1mm} \ell \in [m]$, are drawn independently according to the orthogonalization measure $d\nu$.  As shown in \cite{cata06, ruve08, ra09-1}, matrices arising as such have $(k, \delta)$-RIP with high probability if $m \gtrsim \delta^{-2} k  \log^4(N)$.
By Theorem \ref{thm:main}, these embeddings with randomized column signs satisfy the JL Lemma for $m \gtrsim \varepsilon^{-2} \log(p) \log(N)$, which is optimal up to the $\log(N)$ factors.\footnote{Actually, the bounds in \cite{ra09-1} yield that $m \gtrsim \delta^{-2} k \log^3(k) \log^2(N)$ is sufficient for $\Phi$ to have $(k,\delta)$-RIP with high probability. Hence $\Phi D_\xi$ is a JL-embedding for $m \gtrsim \varepsilon^{-2} \log(p) \log^3{(\log(p))} \log(N)$. However, in order to work with simpler expressions, we bound $k<N$ in the logarithmic factors.}

 For measures with discrete support, such constructions are equivalent to choosing $m$ rows at random from an $N \times N$ matrix with orthonormal rows and uniformly bounded entries.   Examples include the random partial Fourier matrix or random partial Hadamard matrix, formed from the discrete Fourier matrix or discrete Hadamard matrix respectively. (In the Fourier case, we distribute the resulting real and complex parts in different coordinates, inducing an additional factor of $2$.)  Note that the structure of these matrices allows for fast matrix vector multiplication.  Recently, Ailon and Liberty \cite{ailib} verified the JL Lemma for such constructions, with column signs randomized, when $m \gtrsim \varepsilon^{-4} \log(p) \log^4 (N)$.  Our result improves the factor of $\varepsilon^{-4}$ in their result to the optimal dependence $\varepsilon^{-2}$.  We note that while their proof also uses the RIP, it also requires arguments from \cite{ruve08} that are specific to discrete bounded orthonormal systems.

Examples of bounded orthonormal systems connected to continuous measures include the trigonometric polynomials and Chebyshev polynomials, which are orthogonal with respect to the uniform and Chebyshev measures, respectively.  The Legendre system, while not uniformly bounded, can still be transformed via preconditioning to a bounded orthonormal system with respect to the Chebyshev measure \cite{rw10}.  Note that all of these constructions have an associated fast transform.

\paragraph{\bf 2.  Partial circulant matrices.} Other classes of structured random matrices known to have the RIP include \emph{partial circulant} matrices \cite{rom09, ra09, rrt10}.  In one such set-up, the first row of the $N \times N$ matrix is a Gaussian or Rademacher random vector, and each subsequent row is created by rotating one element to the right relative to the preceding row vector. Again, $m$ rows of this matrix are sampled, but in contrast to partial Fourier or Hadamard matrices, the selection need not be random. Using that convolution corresponds to multiplication in the Fourier domain, these matrices have associated fast matrix-vector multiplication routines.  In \cite{rrt10}, such matrices were shown to have the RIP with high probability for $m\gtrsim \operatorname{max}\left(\delta^{-1} k^{\frac{3}{2}} \log^{\frac{3}{2}}(N) ,\delta^{-2} k \log^{4}(N) \right)$.

On the other hand, such a matrix  composed with a diagonal matrix of random signs was shown to be a JL embedding with high probability as long as $m\gtrsim \varepsilon^{-2} \log^2(p)$ \cite{jlcirc}.
Through Theorem \ref{thm:main}, the same results also obtain if $m\gtrsim \operatorname{max}\Big(\varepsilon^{-1}\log^{3/2}\left(\frac{4p}{\eta} \right) \log^{\frac{3}{2}}(N), \varepsilon^{-2}\log\left(\frac{4p}{\eta}\right) \log^{4}(N)\Big)$. For large $p$, this is an improvement compared to \cite{jlcirc}.

\paragraph{\bf 3.  Deterministic constructions.} Several deterministic constructions of RIP matrices are known, including a recent result in \cite{bdfkk} that requires only $m \gtrsim k^{2 - \mu}$.  We refer the reader to the exposition in \cite{bdfkk} for a good overview in this direction;  we highlight two such deterministic constructions here.
 Using finite fields, DeVore \cite{de07-3} provides deterministic constructs of cyclic $0$-$1$-valued matrices with $(k, \delta)$-RIP with $m \gtrsim \delta^{-2} k^2 \log^2(N)$.  Iwen \cite{iwen10} provides deterministic constructions of $0$-$1$-valued matrices whose number theoretic properties allow their products with Discrete Fourier Transform (DFT) matrices to be well approximated using a few highly sparse matrix multiplications.  Both the binary-valued matrices and their products with the DFT yield $(k, \delta)$-RIP matrices with $m \gtrsim \delta^{-2} k^2 \log^2(N)$.  By Theorem \ref{thm:main}, the class of matrices that results by randomizing the column signs of either of these deterministic constructions satisfies the JL Lemma with $m \gtrsim \varepsilon^{-2} \log^2(p) \log^2(N)$.

Note that the amount of randomness needed to construct such embeddings is still comparable to the first two examples, requiring $N$ random bits. Under the model assumption that the entries of each vector $x\in E$ to be embedded has random signs, however, the required randomness in the matrix is removed completely.

%

\medskip
\medskip

In addition to their fast multiplication properties, these examples have the advantage in that the construction of the matrix embedding only uses $N+m$, $2N+m$, and $N$ independent random bits, respectively, compared to $m N$ bits for matrices with independent entries. We note  that stronger embedding results are known with fewer bits, if one imposes restrictions on the $\ell_{\infty}$ norm of the vectors $x \in E$ to be embedded -- see \cite{kanel} and \cite{sparse10}.

\medskip

For each of the aforementioned examples, we summarize the number of dimensions $m$ that are known to be sufficient $(k, \delta)$-RIP to hold.  We also list the previously best known bound for  JL embedding dimension  (if there is one) along with the JL bounds obtained from Theorem \ref{thm:main}.  Where Theorem \ref{thm:main} yields a better bound than previously known, at least for some range of parameters, we highlight the result in bold face.  In each of the bounds, we list only the dependence on $\delta, k$, and $N$, or $\varepsilon, k,$ and $N$, omitting absolute constants.

\medskip

\begin{table}[h!b!p!]
\begin{center}
\begin{tabular}{| c | c | c | c | }
\hline
& RIP bounds & Previous JL Bound & JL Bound from Theorem $\ref{thm:main}$ \\
&&&\\
\hline
&&& \\
Partial Fourier & $\delta^{-2} k \log^4(N)$ &
$\varepsilon^{-4} \log(\frac{p}{\eta}) \log^4(N)$  & { \boldmath $ \varepsilon^{-2} \log(\frac{p}{\eta}) \log^4(N)$}
 \\
 &&&\\
 \hline
 & & & \\
Partial Circulant  & $\operatorname{max}\left(\delta^{-1}{k^{\frac{3}{2}} \log^{\frac{3}{2}}(N)},\right.$  & $\varepsilon^{-2} \log^2{(\frac{p}{\eta})}$ & {\boldmath $\operatorname{max}\left(\varepsilon^{-1}{\log^{\frac{3}{2}}(\frac{p}{\eta}) \log^{\frac{3}{2}}(N)},\right.$}\\
&$\left.{\delta^{-2}}k \log^{4}(N) \right)$&&{ \boldmath $\left.{\varepsilon^{-2}}\log(\frac{p}{\eta})\log^{4}(N) \right)$ } \\
\hline
&&&\\
Deterministic & $\delta^{-2} k^2 \log^2(N)$ &  & \boldmath${ \varepsilon^{-2} \log^2{(\frac{p}{\eta})} \log^2(N)}$ \\
(DeVore, Iwen) &&&\\
 \hline
 &&&\\
Subgaussian  & $\delta^{-2} k \log{(\frac{N}{k})}$ & $\varepsilon^{-2} \log{(\frac{p}{\eta})}$ & $\varepsilon^{-2} \log{(\frac{p}{\eta})} \log(N)$ \\
&&&\\
\hline
\end{tabular}
\end{center}
\end{table}

\paragraph{\bf 4. Compressed sensing in redundant dictionaries.}
As shown recently in \cite{caneel10}, concentration inequalities of type \eqref{eq:concineq} allow for the extension of the compressed sensing methodology to redundant dictionaries -- in particular, tight frames -- as opposed to orthonormal bases only. Since signals with sparse representations in redundant dictionaries comprise a much
more realistic model of nature, this extension of compressed sensing is fundamental. Our results show that basically all random matrix constructions arising in the standard theory of compressed sensing (i.e., based on RIP  estimates) also yield compressed sensing matrices for the redundant framework.

\paragraph{\bf 5. Compressed sensing with cross validation.}
Compressed sensing algorithms are designed to recover approximately sparse signals; if this assumption is violated, they may yield solutions far from the input signal.  In \cite{rw09}, a method of \emph{cross validation} is introduced to detect such situations, and to obtain tight bounds on the error incurred by compressed sensing reconstruction algorithms in general.  There, a subset $y_1 = \Phi_1 x$ of the $m$ measurements $y=\Phi x$ are held out from the reconstruction algorithm and only the remaining measurements $y_2=\Phi_2 x$ are used to produce a candidate approximation $\widehat{x}$ to the unknown $x$. If the hold-out matrix $\Phi_1$ satisfies the Johnson-Lindenstrauss Lemma, then the observable quantity $\| \Phi_1 (x - \widehat{x}) \|_2$ can be used as a reliable proxy for the unknown error $\| x - \widehat{x} \|_2$.  Our work shows that any RIP matrix as in the standard compressed sensing framework can be used for cross validation up to a randomization of its column signs.

\paragraph{\bf 6. Optimal asymptotics in $\delta$ for RIP to hold.}
As mentioned above, it can be shown using a Gelfand width argument that $m \asymp k \log(\frac{N}{k})$ is the optimal asymptotics (in $N$ and $k$) of the embedding dimension for a matrix with the restricted isometry property \eqref{rip}.  Our results -- combined with the known optimality of the asymptotics $m = \varepsilon^{-2} \log(p)$ for the embedding dimension in the Johnson-Lindenstrauss Lemma \eqref{JL} -- imply that up to a factor of $\log\left(\frac{1}{\delta}\right)$, $m\asymp \delta^{-2}$ is the optimal asymptotics  in the restricted isometry constant $\delta$ for fixed $N$ and $k$ as $\delta\rightarrow 0$. Recall that this rate is realized by many of the above examples, such as Gaussian random matrices.
\section{Proof Ingredients \label{sec:ingr}}
The proof of Theorem \ref{thm:main} relies on concentration inequalities for Rademacher sequences and explicit RIP-based norm estimates. The first concentration result is a classical  inequality by Hoeffding \cite{ho63}.
\begin{prop}[Hoeffding's Inequality]
\label{hoeff}
Let $x \in \mathbb{R}^N$, and let $\xi = (\xi_j)_{j=1}^N$ be a Rademacher sequence.  Then, for any $t > 0$,
\begin{align}
\label{bern}
\mathbb{P} \Big( | \sum_{j} \xi_j x_j  | > t  \Big) \leq 2 \exp\Big(-\frac{t^2}{2\| x \|_2^2}\Big).
\end{align}
\end{prop}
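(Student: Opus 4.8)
The plan is to apply the classical Chernoff--Cram\'er exponential moment method, using that the coordinates of $\xi$ are independent and symmetric. First I would fix a parameter $\lambda > 0$ and bound the upper tail by Markov's inequality applied to the exponential:
\[
\mathbb{P}\Big( \sum_j \xi_j x_j > t \Big) \;\leq\; e^{-\lambda t}\, \mathbb{E}\Big[ \exp\Big( \lambda \sum_j \xi_j x_j \Big) \Big] \;=\; e^{-\lambda t} \prod_{j=1}^N \mathbb{E}\big[ e^{\lambda \xi_j x_j} \big],
\]
where the factorization of the expectation is exactly the place where independence of the $\xi_j$ enters.

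Next I would estimate each factor individually. Since $\xi_j \in \{-1,+1\}$ with equal probability, $\mathbb{E}\big[ e^{\lambda \xi_j x_j} \big] = \cosh(\lambda x_j)$, and the key elementary inequality is the sub-Gaussian bound $\cosh(y) \leq e^{y^2/2}$ valid for all $y \in \mathbb{R}$; this follows by a term-by-term comparison of Taylor series, $\cosh(y) = \sum_{n \geq 0} \frac{y^{2n}}{(2n)!} \leq \sum_{n\geq 0} \frac{y^{2n}}{2^n n!} = e^{y^2/2}$, using $(2n)! \geq 2^n n!$. Substituting gives $\prod_j \mathbb{E}\big[ e^{\lambda \xi_j x_j}\big] \leq \exp\big(\tfrac{\lambda^2}{2} \sum_j x_j^2\big) = \exp\big( \tfrac{\lambda^2}{2}\|x\|_2^2\big)$, hence $\mathbb{P}\big( \sum_j \xi_j x_j > t\big) \leq \exp\big( -\lambda t + \tfrac{\lambda^2}{2}\|x\|_2^2\big)$.

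Finally I would optimize the free parameter: the exponent $-\lambda t + \tfrac{\lambda^2}{2}\|x\|_2^2$ is minimized at $\lambda = t/\|x\|_2^2 > 0$, which produces the one-sided bound $\mathbb{P}\big( \sum_j \xi_j x_j > t\big) \leq \exp\big( -t^2/(2\|x\|_2^2)\big)$. Running the identical argument with $-x$ in place of $x$ (equivalently, using that $-\xi$ has the same distribution as $\xi$) controls the lower tail $\mathbb{P}\big( \sum_j \xi_j x_j < -t \big)$ by the same expression, and a union bound over the two one-sided events yields the factor $2$ and the stated inequality. There is no genuine obstacle here; the only step worth isolating as a lemma is the moment generating function estimate $\mathbb{E}\, e^{\lambda \xi_j} = \cosh \lambda \leq e^{\lambda^2/2}$, on which everything rests.
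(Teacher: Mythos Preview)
Your argument is correct and is the standard textbook proof via the Chernoff method: Markov's inequality on $e^{\lambda S}$, independence to factor the moment generating function, the $\cosh(y)\le e^{y^2/2}$ estimate, optimization in $\lambda$, and a union bound for the two tails. The paper itself does not give a proof of this proposition at all; it simply states it as a classical result and cites Hoeffding's original paper, so there is nothing to compare your approach against.
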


The second concentration of measure result is a deviation bound for Rademacher chaos. There are many such bounds in the literature; the following inequality dates back to \cite{HaWr71}, but appeared with explicit constants and with a much simplified proof as Theorem  $17$ in \cite{boluma03}.
\begin{prop}
\label{chaos}
Let $X$ be the $N\times N$ matrix with entries $x_{i,j}$ and assume that $x_{i,i} = 0$ for all $i\in[N]$.  Let $\xi = (\xi_j)_{j=1}^N$ be a Rademacher sequence. Then, for any $t > 0$,
\begin{align}
\label{ineq}
\mathbb{P} \Big( | \sum_{i,j} \xi_i \xi_j x_{i,j} | > t  \Big) \leq 2 \exp\Big( -\frac{1}{64} \min\Big( \frac{\frac{96}{65}t}{\| X \|}, \frac{t^2}{\| X \|_{\cal F}^2} \Big) \Big).
\end{align}
\end{prop}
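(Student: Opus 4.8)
The plan is to treat $Z := \sum_{i,j}\xi_i\xi_j x_{i,j}$ as a homogeneous Rademacher chaos of order two and to derive a two-regime moment bound of the shape $(\mathbb{E}|Z|^q)^{1/q} \lesssim \sqrt{q}\,\|X\|_{\cal F} + q\,\|X\|$, from which the stated tail inequality \eqref{ineq} follows by Markov's inequality and optimization over $q$. The role of the hypothesis $x_{i,i}=0$ is crucial here: it removes the deterministic diagonal contribution $\sum_i x_{i,i}$ (since $\xi_i^2 = 1$) and, more importantly, it is what permits the decoupling step below, so that $Z = \sum_{i \neq j}\xi_i\xi_j x_{i,j}$ is a genuine off-diagonal chaos.

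First I would decouple. By the standard decoupling inequality for quadratic Rademacher chaos (de la Pe\~na--Gin\'e), for every convex nondecreasing $F$ one has $\mathbb{E}\,F(|Z|) \leq \mathbb{E}\,F(C|Z'|)$, where $Z' := \sum_{i,j}\xi_i\tilde\xi_j x_{i,j}$, $\tilde\xi$ is an independent Rademacher copy of $\xi$, and $C$ is an absolute constant. Applying this with $F(u)=u^q$ reduces the problem to bounding the moments of the decoupled form $Z'$. Next I would condition on $\tilde\xi$ and write $Z' = \sum_i \xi_i\, (X\tilde\xi)_i$, which is, conditionally, a linear Rademacher sum. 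The subgaussian moment-generating bound behind Hoeffding's inequality (Proposition~\ref{hoeff}) yields $(\mathbb{E}_\xi |Z'|^q)^{1/q} \leq C'\sqrt{q}\,\|X\tilde\xi\|_2$, so that integrating over $\tilde\xi$ gives $(\mathbb{E}|Z'|^q)^{1/q} \leq C'\sqrt{q}\,(\mathbb{E}\,\|X\tilde\xi\|_2^q)^{1/q}$.

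It then remains to control the moments of $\|X\tilde\xi\|_2$. A direct computation gives $\mathbb{E}\,\|X\tilde\xi\|_2^2 = \sum_{i,j}x_{i,j}^2 = \|X\|_{\cal F}^2$, and the map $\tilde\xi\mapsto\|X\tilde\xi\|_2$ is Lipschitz on the cube with respect to the Euclidean metric with constant $\|X\|$, since $\big|\,\|X\tilde\xi\|_2 - \|X\tilde\xi'\|_2\,\big| \leq \|X(\tilde\xi-\tilde\xi')\|_2 \leq \|X\|\,\|\tilde\xi-\tilde\xi'\|_2$. Concentration for Lipschitz functions on the Rademacher cube therefore yields $(\mathbb{E}\,\|X\tilde\xi\|_2^q)^{1/q} \leq \|X\|_{\cal F} + C''\sqrt{q}\,\|X\|$. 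Combining the three estimates produces the desired growth $(\mathbb{E}|Z|^q)^{1/q} \lesssim \sqrt{q}\,\|X\|_{\cal F} + q\,\|X\|$; feeding this into Markov's inequality and optimizing over $q$ gives the Gaussian tail $\exp(-c\,t^2/\|X\|_{\cal F}^2)$ when $t \lesssim \|X\|_{\cal F}^2/\|X\|$ and the exponential tail $\exp(-c\,t/\|X\|)$ in the complementary range, which together are exactly the $\min$ appearing in \eqref{ineq}.

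The main obstacle I anticipate is not the existence of such a bound but the extraction of the sharp explicit constants $\tfrac{1}{64}$ and $\tfrac{96}{65}$ stated in \eqref{ineq}: the decoupling and Khintchine-type steps above are lossy in their constants, so propagating numbers carefully through them will not reproduce the clean exponents. For that reason I would, at the point of needing explicit constants, instead follow the entropy (Herbst) method of \cite{boluma03}: apply a modified logarithmic Sobolev inequality for the Rademacher measure together with tensorization to obtain a differential inequality for $\lambda\mapsto\log\mathbb{E}\,e^{\lambda Z}$ directly in terms of $\|X\|$ and $\|X\|_{\cal F}$, and integrate it. This route avoids the decoupling constant altogether and is what yields the stated exponents; the decoupling argument above is then best viewed as the conceptual skeleton that explains \emph{why} the two regimes governed by $\|X\|_{\cal F}$ and $\|X\|$ must appear.
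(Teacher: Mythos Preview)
The paper does not give its own proof of this proposition: it is quoted from the literature, with the explicit-constant version attributed to Theorem~17 of \cite{boluma03} (and the qualitative statement to \cite{HaWr71}). So there is no in-paper argument to compare against.

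Your sketch is a correct outline of one standard route to a Hanson--Wright--type bound with unspecified absolute constants, and you correctly identify that recovering the exact numbers $\tfrac{1}{64}$ and $\tfrac{96}{65}$ requires instead the entropy/log-Sobolev argument of \cite{boluma03}, which is precisely the reference the paper cites. One small caution on your Lipschitz step: concentration for arbitrary Lipschitz functions on the discrete cube does \emph{not} hold in general; what is available is Talagrand's inequality for \emph{convex} Lipschitz functions. Since $\tilde\xi\mapsto\|X\tilde\xi\|_2$ is convex, this is fine here, but the convexity should be invoked explicitly.
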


We also need the following basic estimate for RIP matrices (see for instance Proposition $2.5$ in \cite{ra09-1}).
\begin{prop}
\label{RIP-est}
Suppose that  $\Phi \in \mathbb{R}^{m \times N}$ has the Restricted Isometry Property of order $2s$ and level $\delta$.  Then for any two disjoint subsets $J, L \subset [N]$ of size $| J | \leq s, | L | \leq s$,
\begin{align}
\| \Phi_{(J)}^* \Phi_{(L)} \| \leq \delta.
\nonumber
\end{align}
\end{prop}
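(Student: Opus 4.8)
The plan is to reduce the operator-norm bound to an estimate on inner products and then exploit the polarization identity together with the Restricted Isometry Property. Concretely, since $\Phi_{(J)}^* \Phi_{(L)}$ is an $|J| \times |L|$ matrix, its operator norm equals $\sup |\scalprod{\Phi_{(J)} u}{\Phi_{(L)} v}|$ over unit vectors $u \in \mathbb{R}^{|J|}$ and $v \in \mathbb{R}^{|L|}$. So it suffices to show that $|\scalprod{\Phi_{(J)} u}{\Phi_{(L)} v}| \le \delta$ for every such pair.

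First I would extend $u$ and $v$ to vectors $\tilde u, \tilde v \in \mathbb{R}^N$ supported on $J$ and $L$ respectively (zero elsewhere), so that $\Phi_{(J)} u = \Phi \tilde u$ and $\Phi_{(L)} v = \Phi \tilde v$. The key structural observation is that because $J$ and $L$ are disjoint with $|J| + |L| \le 2s$, the vectors $\tilde u \pm \tilde v$ are $2s$-sparse, and by disjointness of supports and $\|u\|_2 = \|v\|_2 = 1$ one has $\|\tilde u \pm \tilde v\|_2^2 = \|\tilde u\|_2^2 + \|\tilde v\|_2^2 = 2$. Thus the $(2s,\delta)$-RIP applies to both $\tilde u + \tilde v$ and $\tilde u - \tilde v$, giving $\|\Phi(\tilde u + \tilde v)\|_2^2 \le 2(1+\delta)$ and $\|\Phi(\tilde u - \tilde v)\|_2^2 \ge 2(1-\delta)$.

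Next I would invoke the polarization identity
\begin{align}
\scalprod{\Phi \tilde u}{\Phi \tilde v} = \tfrac14\Big( \|\Phi(\tilde u + \tilde v)\|_2^2 - \|\Phi(\tilde u - \tilde v)\|_2^2 \Big),
\nonumber
\end{align}
and plug in the two bounds above to get $\scalprod{\Phi \tilde u}{\Phi \tilde v} \le \tfrac14\big(2(1+\delta) - 2(1-\delta)\big) = \delta$. Applying the same argument with $v$ replaced by $-v$ (equivalently, interchanging the roles of the two RIP bounds) yields $\scalprod{\Phi \tilde u}{\Phi \tilde v} \ge -\delta$, hence $|\scalprod{\Phi_{(J)} u}{\Phi_{(L)} v}| \le \delta$. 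Taking the supremum over unit $u, v$ completes the proof.

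There is no real obstacle here; the only thing to be careful about is the bookkeeping ensuring that $\tilde u \pm \tilde v$ genuinely have sparsity at most $2s$ (which uses disjointness of $J$ and $L$ and the size constraints) and that the norms come out to exactly $2$ so that the RIP constants combine cleanly into $\delta$ rather than something larger.
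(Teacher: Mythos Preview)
Your argument is correct and is precisely the standard polarization proof of this fact. Note, however, that the paper does not supply its own proof of Proposition~\ref{RIP-est}; it simply quotes the result as a basic estimate and points to Proposition~2.5 in \cite{ra09-1}. The approach you give is exactly the one underlying that reference, so there is nothing to compare against and nothing missing.
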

The proof of our norm estimate for RIP-matrices uses Proposition \ref{RIP-est}, and relies on the observation commonly used in the theory of compressed sensing (see for example \cite{carota06-1}) that for $z$ in decreasing arrangement and $\| z \|_2 =1$, for $J\geq 2$ one has $\|z_{(J)} \|_\infty\leq \frac{1}{\sqrt{s}}\|z_{(J-1)}\|_2$ and thus $\| z_{(\flat)} \|_{\infty} \leq 1/\sqrt{s}$.
\begin{prop}
\label{C}
Let $R = \lceil{ N/s \rceil}$.  Let $\Phi = (\Phi_j) = (\Phi_{(1)}, \Phi_{(2)}, ..., \Phi_{(R)} ) = (\Phi_{(1)}, \Phi_{(\flat)} ) \in\mathbb{R}^{m \times N}$ have the $(2s, \delta)$-Restricted Isometry Property, let $x = (x_j) = (x_{(1)}, x_{(2)}, ... , x_{(R)}) = (x_{(1)}, x_{(\flat)}) \in\mathbb{R}^N$ be in decreasing arrangement with $\| x \|_2 \leq 1$, and consider the symmetric matrix
\begin{align}
C \in \mathbb{R}^{N \times N}, \quad C_{j,\ell} = \left\{ \begin{array}{ll} x_j \Phi_j^* \Phi_\ell x_\ell,& j \nsim \ell, \quad j, \ell > s, \\
0, & \textrm{else},
\nonumber
\end{array} \right.
\end{align}
and, for $b \in\{-1,1\}^s$, the vector
\begin{align}
v \in \mathbb{R}^N, \quad v = D_{x_{(\flat)}} \Phi_{(\flat)}^*\Phi_{(1)} D_{x_{(1)}} b.
\nonumber
\end{align}
The following bounds hold: $\quad \|C\| \leq  \frac{\delta}{s}, \quad \|C\|_{\cal{F}} \leq  \frac{ \delta}{\sqrt{s}}, \quad$ and $\quad \|v\|_2 \leq \frac{\delta}{\sqrt{s}}$.
\end{prop}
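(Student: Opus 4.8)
The plan is to bound each of the three quantities by reducing to Proposition \ref{RIP-est} and the decay estimate $\|x_{(\flat)}\|_\infty \le 1/\sqrt{s}$ noted just before the statement. Throughout, write $s$ for the block size and recall that the blocks of $x$ (beyond the first) all satisfy $\|x_{(J)}\|_\infty \le \frac{1}{\sqrt s}\|x_{(J-1)}\|_2$ for $J \ge 2$, while $\sum_{J\ge 1}\|x_{(J)}\|_2^2 = \|x\|_2^2 \le 1$.

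First I would handle $\|v\|_2$. Since $v = D_{x_{(\flat)}}\Phi_{(\flat)}^*\Phi_{(1)}D_{x_{(1)}}b$, I estimate $\|v\|_2 \le \|x_{(\flat)}\|_\infty\,\|\Phi_{(\flat)}^*\Phi_{(1)}\|\,\|D_{x_{(1)}}b\|_2$. The middle factor is a block operator-norm: $\Phi_{(1)}$ has $s$ columns and $\Phi_{(\flat)}$ splits into the disjoint blocks $\Phi_{(2)},\dots,\Phi_{(R)}$, so applying Proposition \ref{RIP-est} blockwise and summing (using that the blocks $\{\Phi_{(J)}^*\Phi_{(1)}\}_{J\ge 2}$ act on orthogonal output coordinates, so their norms combine in $\ell_2$ over the index of the weight vector) gives a bound of the form $\delta$ times a factor controlled by $\sum_{J\ge 2}\|x_{(J)}\|_2^2/\|x_{(1)}\|_2^2$ once the $x$-weights are carried along carefully; combined with $\|x_{(\flat)}\|_\infty \le 1/\sqrt s$ and $\|D_{x_{(1)}}b\|_2 = \|x_{(1)}\|_2 \le 1$ this yields $\|v\|_2 \le \delta/\sqrt s$. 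The cleanest route is actually to keep the $x$-weights inside and observe that $\Phi_{(\flat)}^* \Phi_{(1)} D_{x_{(1)}} b = \sum_{J\ge 2}$ (block $J$ contribution), each of which by Proposition \ref{RIP-est} has norm $\le \delta \|x_{(1)}\|_2$, and then the outer $D_{x_{(\flat)}}$ contributes the factor $1/\sqrt s$.

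Next, $\|C\|_{\cal F}$: here $\|C\|_{\cal F}^2 = \sum_{j \nsim \ell,\ j,\ell>s} x_j^2 x_\ell^2 |\Phi_j^*\Phi_\ell|^2$. I would drop the $j\nsim\ell$ restriction (it only removes terms) and bound $|\Phi_j^*\Phi_\ell| \le \|\Phi_j\|_2\|\Phi_\ell\|_2 \le (1+\delta)$ by RIP applied to single columns (or more carefully $\le \delta$ for $j\ne\ell$ in different blocks via Proposition \ref{RIP-est}, and the diagonal $j=\ell$ is excluded by $j\nsim\ell$ being false only on same-block pairs — but same-block off-diagonal pairs are present, so I use $|\Phi_j^* \Phi_\ell| \le 1+\delta$ there). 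Then $\|C\|_{\cal F}^2 \le (1+\delta)^2 \big(\sum_{j>s} x_j^2\big)^2$... this overshoots; the right move is to split same-block vs different-block pairs, use Proposition \ref{RIP-est} bound $\delta$ on different-block pairs and $\|x_{(\flat)}\|_\infty^2 \le 1/s$ to pull out one factor, giving $\|C\|_{\cal F}^2 \le \frac{1}{s}\cdot \delta^2 \sum_{j>s}x_j^2 \le \delta^2/s$, hence $\|C\|_{\cal F}\le \delta/\sqrt s$. For $\|C\|$ I would use the block structure: $C$ is the matrix with blocks $C_{(J),(L)} = D_{x_{(J)}}\Phi_{(J)}^*\Phi_{(L)}D_{x_{(L)}}$ for $J\ne L$, $J,L\ge 2$, and zero otherwise. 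Writing $C = D_{x_{(\flat)}} M D_{x_{(\flat)}}$ where $M$ has blocks $\Phi_{(J)}^*\Phi_{(L)}\mathbf 1_{J\ne L}$, I get $\|C\| \le \|x_{(\flat)}\|_\infty^2 \|M\|$, and $\|M\|$ is bounded by a Gershgorin/Schur-test argument over the $R$ blocks: each block has norm $\le \delta$ by Proposition \ref{RIP-est}, and $M = \Phi_{(\flat)}^*\Phi_{(\flat)} - \mathrm{blockdiag}$, whose norm I can crudely bound, but the tight way is to note $\|M\| \le \max_J \sum_{L\ne J}\|\Phi_{(J)}^*\Phi_{(L)}\|$ only if the blocks were scalar — they aren't, so instead I bound $\|M\|$ by decomposing $\Phi_{(\flat)}$ into two halves of $\le s$ columns each is not possible since there are $R\gg 2$ blocks. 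The correct tool is to bound $\|M\| \le \|\Phi_{(\flat)}\|^2$ minus nothing useful; better: since the claim is $\|C\|\le \delta/s$ and I already have the factor $1/s$ from $\|x_{(\flat)}\|_\infty^2$, I need $\|M\|\le\delta$, which does \emph{not} hold in general. So the genuinely correct estimate must keep the $x$-weights engaged: $\|C\| \le \|C\|_{\cal F} \le \delta/\sqrt s$ is too weak, so one must exploit that the large weights sit in the first few blocks. I would write $\|C\| = \sup_{\|u\|_2=1} |u^* C u| = \sup |\sum_{J\ne L} \langle D_{x_{(J)}} u_{(J)}, \Phi_{(J)}^*\Phi_{(L)} D_{x_{(L)}} u_{(L)}\rangle|$, bound each inner product by $\delta \|x_{(J)}\|_\infty \|x_{(L)}\|_\infty \|u_{(J)}\|_2\|u_{(L)}\|_2 \le \frac{\delta}{s}\|x_{(J-1)}\|_2\|x_{(L-1)}\|_2\|u_{(J)}\|_2\|u_{(L)}\|_2$, and then Cauchy–Schwarz over $J,L$ together with $\sum_J \|x_{(J-1)}\|_2^2 \le 1$ and $\sum_J\|u_{(J)}\|_2^2 = 1$ collapses the double sum to $\frac{\delta}{s}$.

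The main obstacle I anticipate is exactly this bound on $\|C\|$: getting the factor $1/s$ (not merely $1/\sqrt s$) requires using the $\ell_\infty$-decay of \emph{two} blocks simultaneously and then a careful Cauchy–Schwarz across the $R^2$ block pairs, rather than any crude operator-norm-of-$\Phi_{(\flat)}$ estimate; the $\|x_{(\flat)}\|_\infty \le 1/\sqrt s$ bound alone, used once, only gives $\delta/\sqrt s$. The $\|C\|_{\cal F}$ and $\|v\|_2$ bounds are comparatively routine once one commits to keeping the $x_{(J)}$ weights inside the estimates and invoking Proposition \ref{RIP-est} blockwise.
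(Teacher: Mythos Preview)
Your final argument for $\|C\|$ is correct and is exactly the paper's approach: expand $\sup_{\|y\|_2=1}|\langle y,Cy\rangle|$ blockwise, bound each cross term by $\delta\,\|x_{(J)}\|_\infty\|x_{(L)}\|_\infty\|y_{(J)}\|_2\|y_{(L)}\|_2$ via Proposition~\ref{RIP-est}, replace $\|x_{(J)}\|_\infty$ by $\tfrac{1}{\sqrt s}\|x_{(J-1)}\|_2$, and collapse the double sum (the paper uses AM--GM where you use Cauchy--Schwarz, which is immaterial).

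The gap is in your sketches for $\|v\|_2$ and $\|C\|_{\cal F}$. In both places you invoke only the \emph{uniform} bound $\|x_{(\flat)}\|_\infty\le 1/\sqrt s$, and this is not enough. For $\|v\|_2$: the vector $w:=\Phi_{(\flat)}^*\Phi_{(1)}D_{x_{(1)}}b$ has block components $w_{(L)}$ with $\|w_{(L)}\|_2\le\delta\|x_{(1)}\|_2$, but there are $R-1\asymp N/s$ of them, so $\|w\|_2$ can be as large as $\sqrt{R-1}\,\delta$; multiplying by $\|D_{x_{(\flat)}}\|\le 1/\sqrt s$ leaves an uncontrolled factor $\sqrt{R}$. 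The same issue arises in your Frobenius estimate: pulling out a single factor $x_\ell^2\le 1/s$ and then summing $\sum_{L}\|\Phi_{(L)}^*\Phi_j\|^2$ over blocks again introduces a factor $R$.

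The fix is precisely the mechanism you already identified for $\|C\|$: use the \emph{block-dependent} decay $\|x_{(L)}\|_\infty\le \tfrac{1}{\sqrt s}\|x_{(L-1)}\|_2$ inside the sum over $L$, so that the block sum contributes $\sum_{L\ge 2}\|x_{(L-1)}\|_2^2\le 1$ rather than $R$. Concretely, $\|v\|_2^2=\sum_{L\ge 2}\|D_{x_{(L)}}w_{(L)}\|_2^2\le \sum_{L\ge 2}\tfrac{1}{s}\|x_{(L-1)}\|_2^2\,\delta^2\le \delta^2/s$, and for the Frobenius norm, for each fixed $j$ one has $\sum_{\ell\nsim j,\,\ell>s}x_\ell^2|\Phi_j^*\Phi_\ell|^2=\sum_{L}\|D_{x_{(L)}}\Phi_{(L)}^*\Phi_j\|_2^2\le \sum_L \tfrac{1}{s}\|x_{(L-1)}\|_2^2\,\delta^2\le \delta^2/s$, after which $\sum_{j>s}x_j^2\le 1$ finishes. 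This is exactly how the paper proceeds for all three bounds; your closing paragraph diagnoses the issue correctly for $\|C\|$ but the same diagnosis applies, and must be applied, to the other two quantities.
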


\begin{proof}
\begin{align}
\|C\|=&\sup_{\|y\|_2=1} \left|  \scalprod{y}{Cy} \right| \nonumber \\
\leq& \sup_{\|y\|_2=1} \sum_{\substack{J, L =2\\ J\neq L }}^R \left| \scalprod{y_{(J)}}{D_{x_{(J)}} \Phi^*_{(J)}\Phi_{(L)}D_{x_{(L)}} y_{(L)}} \right| \nonumber \\
\leq& \sup_{\|y\|_2=1} \sum_{\substack{J, L =2\\ J\neq L }}^R \|y_{(J)}\|_2\| y_{(L)}\|_2 \|D_{x_{(J)}} {\Phi^*_{(J)}} \Phi_{(L)}D_{x_{(L)}} \| \nonumber \\
\leq & \sup_{\|y\|_2=1}\sum_{\substack{J, L =2\\ J\neq L }}^R \| y_{(J)}\|_2\| y_{(L)}\|_2 \|x_{(J)}\|_\infty \|x_{(L)} \|_\infty \delta\label{eq:RIPest} \\
\leq  & \sup_{\|y\|_2=1} \sum_{J, L =2 }^R \| y_{(J)}\|_2\| y_{(L)}\|_2 \frac{1}{\sqrt{s}}\|x_{(J-1)}\|_2 \frac{1}{\sqrt{s}} \|x_{(L-1)}\|_2 \delta \hspace{10mm}  \nonumber \\
\leq  & \sup_{\|y\|_2=1} \frac{\delta}{s} \sum_{J, L =2}^R \left(\frac{1}{2} \|x_{(J-1)}\|^2_2+\frac{1}{2} \| y_{(J)}\|^2_2\right) \left(\frac{1}{2} \| x_{(L-1)}\|^2_2+\frac{1}{2} \|y_{(L)}\|^2_2\right) \label{eq:ArGeo}\\
\leq & \frac{\delta}{s}. \nonumber
\end{align}
To obtain \eqref{eq:ArGeo}, we use the inequality of arithmetic and geometric means; to obtain \eqref{eq:RIPest}, we use Proposition~\ref{RIP-est}.

Similarly,
\begin{align}
\|v\|_2 \leq & \sup_{\|y\|_2=1} \sum_{L = 2}^R \scalprod{y_{(L)}}{D_{x_{(L)}}^* \Phi_{(L)}^*\Phi_{(1)} D_{(b)} x_{(1)}} \nonumber \\
\leq & \sup_{\|y\|_2=1} \sum_{L = 2}^R \| y_{(L)}\|_2 \| x_{(L)}\|_\infty \| \Phi_{(L)}^*\Phi_{(1)}\| \|b\|_\infty \| x_{(1)} \|_2 \nonumber \\
\leq & \sup_{\|y\|_2=1} \sum_{L = 2}^R \|y_{(L)}\|_2 \frac{1}{\sqrt{s}} \|x_{(L-1)}\|_2 \| \Phi_{(L)}^*\Phi_{(1)}\| \|b\|_\infty \nonumber  \\
\leq & \frac{\delta}{\sqrt{s}} \sup_{\|y\|_2=1} \sum_{L = 2}^R \left(\frac{1}{2}\|y_{(L)}\|^2_2  +\frac{1}{2}\|x_{(L-1)}\|^2_2\right)  \nonumber \\
\leq & \frac{\delta}{\sqrt{s}}. \nonumber
\end{align}
For the Frobenius norm, we estimate:
\begin{align}
\|C\|^2_{\cal F}= & \sum_{\substack{j,l=s+1\\ j\nsim \ell}}^N (x_j \Phi_j^* \Phi_\ell x_\ell)^2 \nonumber \\
= & \sum_{\substack{L=2}}^R  \sum_{\substack{ j=s+1\\ j\notin ]L[}}^N x_j^2   \Phi_j^* \Phi_{(L)} D^2_{x_{(L)}} \Phi_{(L)}^* \Phi_j \nonumber \\
= &\sum_{L=2}^R \sum_{\substack{ j=s+1\\ j\notin ]L[}}^N   x_j^2 \| D_{ x_{(L)}} \Phi_{(L)}^* \Phi_j \|^2 \nonumber \\
\leq &\sum_{L=2}^R \sum_{\substack{ j=s+1\\ j\notin ]L[}}^N    x_j^2  \|x_{(L)}\|_\infty^2 \| \Phi_{(L)}^* \Phi_j\|^2 \nonumber \\
\leq &\sum_{L=2}^R  \frac{\delta^2}{s} \|x_{(L-1)}\|_2^2  \sum_{j=1}^{N}   x_j^2 \nonumber \\
\leq &\frac{\delta^2}{s}. \nonumber
\end{align}
\end{proof}

\section{Proof of the main results\label{sec:proof}}

We begin by proving Theorem \ref{thm:main}.  Without loss of generality, we assume that all $x \in E$ are normalized so that $\| x \|_2 = 1$. Furthermore, assume that $k=2 s$ is even.

We first consider a fixed $x\in E$, eventually taking a union bound over all $x$. We further assume that $x$ is in decreasing arrangement. To achieve this, we reorder the entries of $x$, and permute the columns of  $\Phi$ accordingly. This has no impact on the following estimates, as the Restricted Isometry Property of the matrix $\Phi$ is invariant under permutations of its columns.
 We need to estimate
\begin{align}
\| \Phi& D_\xi x \|_2^2  \hspace{1mm} = \hspace{1mm}  \| \Phi D_x \xi \|_2^2 \nonumber \\ =& \hspace{1mm}  \sum_{J=1}^R \| \Phi_{(J)} D_{x_{(J)}} \xi_{(J)}  \|_2^2 \hspace{1mm} +  \hspace{1mm} 2 \xi_{(1)}^{*} D_{x_{(1)}} \Phi_{(1)}^* \Phi_{(\flat)} D_{x_{(\flat)}} \xi_{(\flat)} \hspace{1mm} + \hspace{1mm} \sum_{\substack{J, L =2\\ J\neq L }}^R\scalprod{\Phi_{(J)} D_{x_{(J)}} \xi_{(J)}}{\Phi_{(L)} D_{x_{(L)}} \xi_{(L)}}. \label{eq:expansion}
\end{align}
We will bound the terms separately.

\begin{enumerate}
\item
As $\Phi$ has the Restricted Isometry Property of order $k\geq s$ and level $\delta$, it also has the RIP of order $s$ and level $\delta$, and each $\Phi_{(J)}$ is almost an isometry.
Hence, noting that $\|  D_{x_{(J)}} \xi_{(J)} \|_2 = \|D_{\xi_{(J)}}  x_{(J)}\|_2 = \|x_{(J)}\|_2$,
the first term can be estimated as follows.
\begin{align}
\nonumber
(1 - \delta) \| x \|_2^2 \leq \sum_{J=1}^R \| \Phi_{(J)} D_{x_{(J)}} \xi_{(J)}  \|_2^2 \leq (1 + \delta) \| x \|_2^2.
\end{align}
Thus, using that $\delta \leq \varepsilon/4,$

\begin{align}
\nonumber
\Big( 1 - \frac{\varepsilon}{4} \Big) \| x \|_2^2 \leq \sum_{J=1}^R\| \Phi_{(J)} D_{x_{(J)}} \xi_{(J)}  \|_2^2 \leq \Big( 1 + \frac{\varepsilon}{4} \Big) \| x \|_2^2.
\end{align}

\item
To estimate the second term, fix $\xi_{(1)} =: b$ and consider the random variable
\[
X = b^*D_{x_{(1)}}\Phi_{(1)}^*\Phi_{(\flat)} D_{x_{(\flat)}} \xi_{(\flat)} = \scalprod{v}{\xi_{(\flat)}}
\]
 with $v$ as in Proposition \ref{C}.   By Hoeffding's inequality (Proposition~\ref{hoeff}) combined with Proposition~\ref{C},
\begin{align}
\mathbb{P}(| X | \geq \gamma \varepsilon) \leq 2\exp{\Big(-\frac{s \gamma^2 \varepsilon^2}{2\delta^2} \Big)}.
\label{eq:conck2}
\end{align}

Taking a union bound, one obtains:
\begin{align}
\nonumber
\P\left( \exists x\in E: |X| \geq \gamma \varepsilon \right) \leq \exp\left(\log p + \log 2 -\frac{\gamma^2 s \varepsilon^2}{2\delta^2} \right).
\end{align}

In order for this probability to be less than $\eta/2$, we need:
\begin{align}
\nonumber
\log 2p -\frac{ s \gamma^2 \varepsilon^2}{2\delta^2} \leq \log \frac{\eta}{2},
\end{align}
that is,
\begin{align}
\label{cond1}
\delta \leq \varepsilon/4 \sqrt{\frac{ 8\gamma^2 s}{\log{(4p/\eta)}}}.
\end{align}
\item

We can rewrite the third term as
\begin{align}
\nonumber
 \sum_{\substack{J, L =2\\ J\neq L }}^R\scalprod{\Phi_{(J)} D_{x_{(J)}} \xi_{(J)}}{\Phi_{(L)} D_{x_{(L)}} \xi_{(L)}} =  \scalprod{\xi}{C \xi} = \sum_{j,\ell=s+1}^N \xi_j \xi_\ell C_{j\ell},
\end{align}
where $C \in \mathbb{R}^{N \times N}$ is the matrix as in Proposition \ref{C}.
By Proposition \ref{C}, we have $\|C\|\leq  \frac{\delta}{s}$ and $\|C\|_{\cal F} \leq  \frac{\delta}{\sqrt{s}}$ , hence by
Proposition~\ref{chaos}
\begin{align}
\P\left( \left| \sum_{j,\ell=s + 1}^N \xi_j \xi_\ell C_{j\ell} \right| \geq \tau \varepsilon \right)\leq 2 \exp\left(-\frac{1}{64}\min\left(\frac{ s \tau^2 \varepsilon^2}{\delta^2 }, \frac{96 \tau s\varepsilon}{ 65\delta} \right)\right). \label{eq:concm}
\end{align}
Using a union bound, one obtains:
\begin{align}
\nonumber
\P\left( \exists x\in E: \left| \sum_{j,\ell=s+1}^N \xi_j \xi_\ell C_{jl} \right| \geq \tau \varepsilon \right)\leq 2 \exp\left(\log p - \frac{1}{64} s \min\left(\frac{ \tau^2 \varepsilon^2}{\delta^2 }, \frac{96 \tau \varepsilon}{65 \delta} \right)\right).
\end{align}

In order for this probability to be less than $\eta/2$, we need:
\begin{align}
\nonumber
\log 2p-\frac{1}{64} s \min\left(\frac{ \tau^2\varepsilon^2}{\delta^2}, \frac{96\tau\varepsilon}{ 65\delta} \right)\leq \log{(\eta/2)} ,
\end{align}
that is,
\begin{align}
\label{cond2}
\delta \leq \frac{\varepsilon}{4} \min\left(\sqrt{ \frac{\tau^2 s}{4\log \left(\frac{4p}{\eta}\right)}},  \frac{\frac{96}{65}\tau s}{16 \log \left(\frac{4p}{\eta}\right)}\right).
\end{align}
\end{enumerate}
By assumption, $\delta \leq \frac{\varepsilon}{4}$, so
conditions \eqref{cond1} and \eqref{cond2} are satisfied by setting $\tau = .55, \gamma = .1$, and $s \geq 20 \log{(4p/\eta)}$ (that is, $k=2s\geq 40 \log{(4p/\eta)}$).  Then the second term is bounded by $.2\delta$ in absolute value, and the last term is bounded by $.55\delta$. Together with the deterministic RIP-based estimate for the first term, this implies the Theorem. \qed

\paragraph{Proof of Proposition~\ref{prop:conv}.}
Fix $\varepsilon > 0$, and suppose that there is a constant $c_3$ such that for all pairs $(k,m)$ with $k \leq c_3 \delta^2 m/\log(N/k)$, $\Phi=\Phi(m) \in \mathbb{R}^{m \times N}$ has the Restricted Isometry Property of order $k$ and level $\delta = \frac{\varepsilon}{4}$. Now let $(k,m)$ be admissible. An elementary monotonicity argument shows that there exists $k'\geq k$ such that $(k',m)$ is admissible and $k' \geq \frac{1}{2} c_3 \delta^2 m/\log(N/k')$. 
 Fix $x\in\R^N$ and let $\xi \in \mathbb{R}^N$ be a Rademacher sequence. Then, for any fixed vector $x \in \mathbb{R}^N$, the estimates in equations \eqref{eq:conck2} and \eqref{eq:concm} with parameters $\tau = .55$ and $\gamma=.1$ imply the existence of a constant $c_5 < 1$ for which
\begin{eqnarray}
\P\left(\Big| \| \Phi D_{\xi}x \|_2^2 - \| x \|_2^2 \Big| \geq \varepsilon \| x \|_2^2 \right)) &\leq& 2\exp{(-c_5 k')} \nonumber \\
&\leq & 2\exp{(-c_4 \varepsilon^2 m \log^{-1}(N/{k'}))}
\end{eqnarray}
where $c_4 = c_5 c_3/32$.   \qed

\paragraph{Remarks:}
Although we have stated the main result for the setting $x \in \mathbb{R}^N$ and $\Phi \in \mathbb{R}^{m \times N}$, all of the analysis holds also in the complex setting, $x \in \mathbb{C}^N$ and $\Phi \in \mathbb{C}^{m \times N}$.

As shown in \cite{badadewa08}, a random matrix $\Phi$ whose entries follow a subgaussian  distribution is known to have with high probability the Restricted Isometry Property of best possible order, that is, one can choose $m\asymp \delta^{-2} k \log\left(\frac{N}{k}\right).$ When $k\geq 40 \log\left(\frac{4p}{\eta}\right)$, $\Phi$ is a JL embedding by Theorem~\ref{thm:main}, and our resulting bound for $m$ is optimal up to a single logarithmic factor in $N$. This shows that Theorem~\ref{thm:main} must also be optimal up to a single logarithmic factor in $N$.

\subsection*{Acknowledgments}
The authors would like to thank Holger Rauhut, Deanna Needell, Jan Vyb\'iral, Mark Tygert, Mark Iwen, Justin Romberg, Mark Davenport, and Arie Israel for valuable discussions on this topic.  Rachel Ward gratefully acknowledges the partial support of National Science Foundation Postdoctoral Research Fellowship.  Felix Krahmer gratefully acknowledges the partial support of the Hausdorff Center for Mathematics.
Finally, both authors are grateful for the support of the Institute of Advanced Study through the Park City Math Institute where this project was initiated.
\bibliography{RIP_JL}
\bibliographystyle{abbrv}

\end{document}